\numberwithin{equation}{section} 
\newtheorem{thm}{Theorem}[section]
 \newtheorem{prop}[thm]{Proposition}
 \newtheorem{definition}[thm]{Definition}
\renewcommand{\simeq}{\cong}
\newcommand{\bref}[1]{\textbf{\ref{#1}}}
\newcommand{\im}{\mathop{\mathrm{Im}}}
\newcommand{\gh}[1]{\mathrm{gh}(#1)}
\newcommand{\dv}{\mathrm{d_v}}
\newcommand{\dx}{\mathrm{d}_X}
\renewcommand{\d}{\partial}
\renewcommand{\dh}{\mathrm{d_h}}
\renewcommand{\geq}{\,{\geqslant}\,}
\renewcommand{\leq}{\,{\leqslant}\,}
\newcommand{\binner}[2]{%
  {\langle}\kern-4.15pt{\langle}#1{,}\,#2{\rangle}\kern-4.15pt{\rangle}}
\newcommand{\commut}[2]{[#1{,}\,#2]}
\newcommand{\ab}[2]{\big(#1,#2\big)}
\newcommand{\half}{\mathchoice{%
    \ffrac{1}{2}}{\frac{1}{2}}{\frac{1}{2}}{\frac{1}{2}}}
\newcommand{\hhalf}{\frac{1}{2}}
\newcommand{\ffrac}[2]{\raisebox{.5pt}%
  {\footnotesize$\displaystyle\frac{#1}{#2}$}\kern1pt}
\newcommand{\dl}[1]{\mathchoice{\ffrac{\d}{\d #1}}{\frac{\d}{\d #1}}{\ffrac{\d}{\d #1}}{\ffrac{\d}{\d #1}}}
\newcommand{\Liealg}{\mathfrak} 
\newcommand{\algg}{\Liealg{g}}
\newcommand{\algA}{\mathcal{A}}
\newcommand{\cC}{\mathcal{C}}
\newcommand{\fC}{\mathbb{C}}
\newcommand{\fZ}{\mathbb{Z}}
 \def\cE{\mathcal{E}}
 \def\cI{\mathcal{I}}
\def\cK{\mathcal{K}}
 \def\cL{\mathcal{L}}
 \def\cX{\mathcal{X}}
\newcommand\blfootnote[1]{%
  \begingroup
  \renewcommand\thefootnote{}\footnote{#1}%
  \addtocounter{footnote}{-1}%
  \endgroup
}
\newcommand{\wker}{{\widetilde{\cK}}}
\newcommand{\pdv}[1]{\frac{\d}{\d #1}}
\newcommand{\h}{\cL}
\newcommand{\order}[1]{\mathcal{O}(#1)}
\newtheorem{proposition}[thm]{Proposition}
\title{Consistent deformations in the presymplectic BV-AKSZ approach}
\author[1]{Jordi Frias}
\author[2,$\dagger$,$\ddagger$]{~Maxim Grigoriev }
\affil[1]{\textsl{ Institute for Theoretical and Mathematical Physics,\protect\\
  Lomonosov Moscow State University, 119991 Moscow, Russia  \vspace{5pt}}}
\affil[2]{\textsl{Service de Physique de l'Univers, Champs et Gravitation, \protect\\ Universit\'e de Mons, 20 place du Parc, 7000 Mons, 
Belgium \vspace{5pt}}}
\date{}
\begin{document}

\maketitle

\begin{abstract}
We develop a framework for studying consistent interactions of local gauge theories, which is based on the presymplectic BV-AKSZ formulation. The advantage of the proposed approach is that it operates in terms of finite-dimensional spaces and avoids working with quotient spaces such as local functionals or functionals modulo on-shell trivial ones. The structure that is being deformed is that of a presymplectic gauge PDE, which consists of a graded presymplectic structure and a compatible odd vector field. These are known to encode the Batalin--Vilkovisky (BV) formulation of a local gauge theory in terms of the finite dimensional supergeometrical object. Although in its present version the method is limited to interactions that do not deform the presymplectic structure and relies on some natural assumptions, it gives a remarkably simple way to analyse consistent interactions. The approach can be considered as the BV-AKSZ extension of the frame-like approach to consistent interactions. We also describe the underlying homological deformation theory, which turns out to be slightly unusual compared to the standard deformations of differential graded Lie algebras. As an illustration, the Chern-Simons and YM theories are rederived starting from their linearized versions.
\end{abstract}

\blfootnote{${}^{\dagger}$ Supported by the ULYSSE Incentive
Grant for Mobility in Scientific Research [MISU] F.6003.24, F.R.S.-FNRS, Belgium.}

\blfootnote{${}^{\ddagger}$ Also at Lebedev Physical Institute and Institute for Theoretical and Mathematical Physics, Lomonosov MSU}

\newpage

\tableofcontents

\section{Introduction}
The usual approach to consistent interactions of local gauge theories is based on the celebrated Noether procedure which amounts to simultaneous consistent deformation of the Lagrangian and gauge generators. The difficulty of the procedure has to do with the fact that even cubic vertexes are defined modulo on-shell vanishing terms  and total derivatives, bringing serious technical difficulties~\cite{Berends:1984rq}.

The Noether procedure has been cast into the well-defined homological deformation theory by Barnich and Henneaux~\cite{Barnich:1993vg} who rephrased it in terms of the Batalin-Vilkovisky (BV)~\cite{Batalin:1981jr,Batalin:1983wj} formulation of the underlying gauge theory. In particular, it was shown that the deformation procedure is controlled by the BRST cohomology in the space of local functionals while possible  obstructions are controlled by the Lie bracket induced by the BV antibracket in the local BRST cohomology~\cite{Barnich:1993pa,Henneaux:1997bm,Barnich:2000me}, see also \cite{Henneaux:1997ha,Barnich:2001mc,Bekaert:2002uh,Boulanger:2006gr} for applications. Furthermore, in this approach the deformations theory of local gauge systems fits the standard framework~\cite{Gerstenhaber:1964,Kontsevich:1997vb} for deformations of algebraic structures, based on so-called differential graded Lie algebras (DGLA), for a review see e.g.~\cite{Manetti2005}. 

Despite being very elegant and powerful, the local BRST cohomology approach to consistent deformations can be quite involved technically because it requires working in terms of infinite-dimensional jet-bundles and 
quotient spaces such as the space of local functionals. At the same time, in the case of topological models the procedure can be reformulated in terms of the finite-dimensional manifolds using the so-called AKSZ construction~\cite{Alexandrov:1995kv}. The idea to employ AKSZ-like formalism to study consistent deformations appeared in~\cite{Barnich:2009jy} and then was successfully applied to higher-spin interactions in 3 dimensions in~\cite{Grigoriev:2019xmp,Grigoriev:2020lzu}. The main advantage of the approach is that it is immediately on-shell,  locality is taken care of by the formalism itself, and the underlying space is (effectively) finite-dimensional. However, in this framework  one does not generally get a full control over the Lagrangian formulation at all orders and the approach is somewhat limited to topological systems. 

As far as general gauge theories are concerned an interesting extension of the AKSZ construction is the so-called presymplectic BV-AKSZ formulation which has been recently shown to encode a local gauge theory in terms of a finite-dimensional graded geometry~\cite{Grigoriev:2022zlq,Dneprov:2024cvt}, see also~\cite{Alkalaev:2013hta,Grigoriev:2016wmk,Grigoriev:2020xec,Dneprov:2022jyn} for earlier but less general versions.\footnote{It is also worth mentioning the approach of \cite{Barnich:2010sw,Grigoriev:2010ic,Grigoriev:2019ojp} 
which allows one to represent a given gauge theory as an infnite-dimensional AKSZ sigma model and which was a precursor of the presymplectic BV-AKSZ framework, see also~\cite{Stora:1984,Barnich:1995ap,Brandt:1996mh} for a relevant earlier developments in the context of local BRST cohomology. Another related earlier development is the so-called unfolded formulation~\cite{Vasiliev:1988xc,Vasiliev:2005zu}, where the equations of motion of a gauge system take the form of a free differential algebra, see~\cite{Barnich:2006hbb} for more details on the relation to the AKSZ sigma-models. For alternative approaches to extend AKSZ construction to nontopological models see e.g.~\cite{costello2011renormalization,Bonechi:2022aji}.} Presymplectic BV-AKSZ formalism can be considered as a far-going generalization of the AKSZ construction to the case of not necessarily topological and diffeomorphism-invariant theories. 

In this work we develop the procedure to analyse consistent interactions within the presymplectic BV-AKSZ approach. We restrict ourselves to the situation where the presymplectic structure stays intact under the deformations and hence restrict ourselves to strictly local interactions.  The important difference with the standard deformation theory is that in our approach two structures are deformed simultaneously -- the "covariant Hamiltonian"  and the $Q$ differential. However, it turns out that this does not lead to complications because the analysis breaks down into two consecutive steps: finding an infinitesimal deformation of the Hamiltonian, which is controlled by the standard homological setup, and then constructing the associated deformation of the $Q$ structure. The difference, however, is that the second step could also obstruct the deformation, in general.  

The paper is organized as follows. Section~\bref{sec:prelim} contains background material on the deformation theory of algebraic structures, Batalin-Vilkovisky formulation and its jet-bundle extension, and the presymplectic BV-AKSZ approach. In the main Section~\bref{sect:main} we develop a deformation theory in the presymplectic BV-AKSZ formalism and spell out explicitly the recursive procedure for studying interactions. In Section~\bref{sec:applications} we illustrate the approach using the examples of Chern-Simons and Yang-Mills theories.

\section{Preliminaries}
\label{sec:prelim}
\subsection{Deformation theory}
\label{sec:deform-theory}

A systematic framework~\cite{Gerstenhaber:1964} to study the deformations of algebraic structures is provided by differential graded Lie algebras (DGLA). In so doing, the infinitesimal deformations and obstructions acquire a homological interpretation. Now we briefly recall how it works without trying to be maximally general. For a systematic exposition and further references, see e.g.~\cite{Manetti2005}. 

By definition, a differential graded Lie algebra is a $\fZ$-graded linear space $L$ equipped with a differential $\delta:L\to L$ satisfying $\deg{\delta}=1$ , $\delta^2=0$ and the compatible graded Lie bracket $\commut{\cdot}{\cdot}$ of degree $0$. In applications, the differential $\delta$ encodes the underlying algebraic object. 
Typically, one is interested in modifying a given ``undeformed'' differential $\delta=\delta_0$ so that in the DGLA terms the bracket is unchanged while the differential gets deformed to $\delta=\delta_0+\commut{M}{\cdot}$ such that $(\delta,\commut{}{})$ is still a DGLA. 
Condition $\delta^2=0$ implies that the element $\delta_0 M+\half\commut{M}{M}$ belongs to the centre, i.e. its adjoint action vanishes. Assuming that the degree $2$ component of the centre is trivial, one arrives at the Maurer-Cartan (MC) equation:
\begin{equation}
\label{eq_mc_equation}
\delta_0 M+\half\commut{M}{M}=0\,.
\end{equation}

It is often convenient to consider formal deformations. This is achieved by tensoring $L$ with formal series $\fC[[g]]$ in the deformation parameter $g$. Assuming $\delta_0$ and $\commut{}{}$ to be of order $0$ in $g$ and 
\begin{equation}
M=\sum_{i=1}^\infty M_i    \,,
\end{equation}
where $M_i$ is of order $i$ in $g$,
the MC equation takes the form
\begin{equation}
\label{MC}
\delta_0 M_k
+\hhalf\sum_{l=1}^{k}
\commut{M_l}{M_{k-l}}=0\,, \quad k \geq 1
\end{equation}
For instance, for $k=1$ one gets $\delta_0 M_1=0$ so that the first order deformations are determined by $\delta_0$-cocycles in degree $1$.

Not every first-order deformation can be extended to the next order. Indeed,
at order $2$ equation~\eqref{MC} says that $\delta M_2+\half\commut{M_1}{M_1}=0$ or in other words $\commut{M_1}{M_1}$ must be trivial in the cohomology of $\delta_0$. If it is not, one says that the deformation is obstructed.

It is clear that at order $k$ one can always add a transformation of the form  $M_k \to M_k+\delta_0 T_k$. Such deformations are unobstructed and correspond to the automorphisms of $L$ of the form $\exp{\commut{T_k}{\cdot}}$ and hence are naturally considered trivial.

To illustrate the general framework let us consider the Lie algebra deformations. Given a linear space $\algg$ the Lie algebra structure on $\algg$ can be encoded in the homological vector field $d_{CE}$ on $\algg[1]$. In other words cochains of $\algg$ are identified with functions on $\algg[1]$ and the Chevalley-Eilenberg differential of $\algg$ is identified with $d_{CE}$. The associated DGLA is that of vector fields on $\algg[1]$ with the degree being the homogeneity degree and the graded Lie bracket being the commutator. The undeformed $d_{CE}$ gives rise to the differential $\delta_0:L\to L$, $\delta_0 a=\commut{d_{CE}}{a}$. In this case the MC equation can be written as:
\begin{equation}
\commut{d_{CE}+M}{d_{CE}+M}=0\,.
\end{equation}
In a more general setup, one could allow $\delta_0$ to be an outer derivation of $\commut{\cdot}{\cdot}$.

Let us also give a coordinate expression for the structures involved in this example. Let $e_i$ be a basis in $\algg$
and the Lie algebra operation on $\algg$ to be determined by the structure constants $k_{ij}{}^k$ defined through $\commut{e_i}{e_j}_{\algg}=k_{ij}{}^k e_k$. Coordinated on $\algg[1]$ in the basis $e_i$ are $c^i$, $\deg{c^i=1}$. The Chevalley-Eilenberg differential $d_{CE}$ is given by:
\begin{equation}
    \label{ce_differential}
    d_{CE}=-\hhalf c^ic^j k_{ij}{}^k\dl{c^k}\,.
\end{equation}
The nilpotency condition $(d_{CE})^2=0$ is equivalent to the Jacobi identity of $\algg$. A generic element of $L$ of degree $1$ is given by $M^\prime=d_{CE}-\half c^ic^j M_{ij}{}^k\dl{c^k}$ and the MC equation implies that $k_{ij}{}^k+M_{ij}{}^k$ satisfy Jacobi identity and hence define a new Lie algebra structure on $\algg$. If $\algg$ is  a graded space, the generic element $M^\prime$ of total degree $1$ is not necessarily linear in $c^i$ and in fact defines an $L_{\infty}$ algebra structure on $\algg$, provided that $M^\prime$ satisfies the Maurer-Cartan equation. In other words, this is a deformation in the $L_{\infty}$ sense. This example illustrates how the deformation approach may also help to identify more general algebraic structures.

\subsection{BV formulation of gauge theories}

It was observed in \cite{Barnich:1993vg} that the introduction of interactions can be reformulated as a deformation of the underlying DGLA which, it turn, is determined by the Batalin-Vilkovisky formulation of the gauge theory in question. 

In the BV formalism a gauge theory is described in terms of the field-antifield spaceб which is a $\fZ$-graded supermanifold graded by the ghost degree $\gh{\cdot}$. This space is equipped with an odd Poisson bracket (BV antibracket) which carries ghost degree $1$ and satisfies the graded analogues of the Leibniz and Jacobi identities. The information about the action and (higher) gauge transformation and their algebra is encoded in the unique function $S_{BV}$, $\gh{S_{BV}}=0$, called the \textit{master action}. At the same time, the gauge invariance of the action and further compatibility relations are encoded in the so-called master equation:
\begin{equation}
\ab{S_{BV}}{S_{BV}}=0\,,
\end{equation}
where $\ab{}{}$ denotes the odd Poisson bracket.

The space of functions (strictly speaking, space of functionals) on the field-antifield space can be seen as a DGLA if one takes $\delta=\ab{S_{BV}}{\cdot}$ and shifts the ghost degree by $1$ so that after the shif the degree of the antibracket becomes zero. In such a framework  one can study deformations of the gauge theories either by introducing deformation parameter $g$ or simply using homogeneity in fields. In the latter case, one assumes that the initial theory is free, i.e. $S_{BV}$ is quadratic in fields. This precisely corresponds to the standard QFT approach where one starts with free fields and then introduces interaction vertexes.

Although BV formulation defines a standard deformation theory, at least at the algebraic level, the situation is more involved because the field-antifield space is typically infinite-dimensional and the result crucially depends on the choice of the class of functionals which constitute the linear space underlying the respective DGLA.

The usual and  physically motivated choice is given by local functionals, i.e. functionals whose integrands depend on only finite number of derivatives of the fields. In the standard framework, this space can be described as the quotient space of local functions modulo total derivatives so that the antibracket descends to the quotient, resulting in a DGLA structure therein~\cite{Barnich:1997ed}. This is of course a refinement of the standard QFT framework where the interaction vertexes are defined modulo total derivatives.

\subsection{BV for local gauge theories and consistent deformations}

The standard mathematical framework to study local gauge theories is that of jet-bundles. This has a natural extension  to the case where the fibres are $\fZ$-graded supermanifolds, which is crucial in the BV context. Let $\cE\to X$ be a bundle over a real space-time manifold $X$. Consider the algebra $\bigwedge^\bullet(J^\infty(\mathcal{E}))$ of local differential forms on $J^\infty(\mathcal{E})$. Local forms are those that depend on finite number of coordinates, i.e. those that can be represented as the pullbacks of forms from finite jets. Because infinite jet bundle has a canonical horizontal distribution (Cartan distribution),  $\bigwedge^\bullet(J^\infty(\mathcal{E}))$ is actually bigraded and the total form degree is a sum of vertical and horizontal form degree. An $r$-horizontal and $s$-vertical form on $J^\infty(\mathcal{E})$ is refereed to as an $(r,s)$-form. The de Rham differential on $J^\infty(\mathcal{E})$ decomposes into the vertical and horizontal parts:
\begin{align}
\begin{gathered}
	d=\dh+\dv\,, \qquad \dh\coloneqq dx^a D_a, \\
    \dh^2=0,\qquad \dv^2=0, \qquad [\dh,\dv] = 0.
\end{gathered}
\end{align}
In the previous expression, $D_a$ is the total derivative $D_a = \pdv{x^a} + \Phi^A_{Ia}\pdv{\Phi^A_I}$, with $x^a$ the spacetime coordinates, $\Phi^A$ the field coordinates, and $\Phi^A_I$ all its derivatives (in the jet bundle sense). We denote by \textit{$(r,s)$-form} a $r$-horizontal and $s$-vertical form on $J^\infty(\mathcal{E})$.
More details about the geometry of jet spaces can be found in e.g. \cite{Anderson1991,Krasil'shchik:2010ij}. 

The locality of the BV formalism can be made manifest by reformulating it in the language of jet-bundles~\cite{Henneaux:1993hv,Barnich:1995db}, see also~\cite{Piguet:1995er,Barnich:2000zw}. More specifically, a \textit{local BV system} $(\mathcal{E},s,\omega)$ is a graded  fibre bundle $\mathcal{E}\rightarrow X$ over a real spacetime manifold $X$ together with a nilpotent and evolutionary vector field $s$, $\gh{s}=1$ defined on $J^\infty(\mathcal{E})$. In addition, $\cE$ is equipped with a closed  $n+2$ form $\omega_\cE$ of ghost degree $-1$ such that it does not have zero vectors and vanishes on any $3$-vertical vectors. We denote by $\omega$ its pullback to $J^\infty(\mathcal{E})$ and require $s$ and $\omega$ to be compatible in the sense that $L_s\omega=\dh(\cdot)$, i.e. $L_s\omega$ is trivial in $\dh$-cohomology. We refer to $s$ and $\omega$ as to \textit{BRST differential} and the \textit{BV symplectic structure} respectively. Note that $\dv\omega=\dh\omega=0$ and that $\omega$ is necessarily an $(n,2)$-form.

Given a local BV system one finds that $L_s\omega=\dh(\cdot)$ implies that (at least locally) there exists an $(n,0)$-form $\cL_{BV}$ of ghost degree $0$ such that $\dv \cL_{BV}+ i_s \omega=\dh \Theta$  for some $(n-1,1)$-form $\Theta$. $\cL_{BV}$ is referred to as \textit{BV Lagrangian} and defines a BV action functional $S_{BV}$ via
\begin{equation}
S_{BV}[\sigma]=\int_X \bar\sigma^*(\cL_{BV})\,,
\end{equation}
where $\sigma$ is a supersection of $\cE$ and $\bar\sigma: X \to J^\infty(\cE)$ its jet-prolongation.

Consider an $(n,0)$-form $f$ and an evolutionary vector field $V_f$ such that $\iota_{V_f}\omega +\dv f = \dh(\dots)$. We say that $V_f$ is a \textit{Hamiltonian vector field} with \textit{Hamiltonian} $f$. For any such $f$, its associated Hamiltonian vector field exists and is unique because $\omega^\cE$ does not have zero vectors. The other way around, given an evolutionary $V$ satisfying $L_V\omega=\dh(\cdot)$, there exists (at least locally)  $H_V$ satisfying $\iota_{V}\omega +\dv H_V = \dh(\dots)$. In particular, the BRST differential $s$ defines the \textit{BV Lagrangian} $\cL_{BV}$.

Let $V_f$ and $V_g$ be Hamiltonian vector fields of $f$ and $g$ respectively. Consider the following binary operation:
\begin{equation}
	\ab{f}{g} \coloneqq \iota_{V_f}\iota_{V_g}\omega\,.
\end{equation}
It is well-defined on the equivalence classes of $(n,0)$-forms modulo $d_h$-exact forms and moreover defines a degree $1$
graded Lie algebra structure therein (by shifting the degree it becomes a standard graded Lie algebra). This quotient space is known as the space of \textit{local functionals} and can be identified with $H^{(n,0)}(\dh)$.

One can check that the BV Lagrangian $\cL_{BV}$ defined through $\dv \cL_{BV}+\iota_s\omega=\dh(\cdot)$ satisfies
\begin{equation}
\ab{\cL_{BV}}{\cL_{BV}} = \dh(\cdot)\,,
\end{equation}
thanks to $s^2=0$. This is referred to as the \textit{classical master equation}. Conversely, given $\cL_{BV}$, $\gh{\cL_{BV}}=0$ satisfying the master equation, its Hamiltonian vector field is nilpotent. In this way, a local BV system can be defined by giving $\cL_{BV},\omega$ instead of $s,\omega$. See e.g. \cite{Sharapov:2016sgx,Grigoriev:2022zlq} for further details and proofs.

To summarise, a local BV system gives rise to a (shifted) DGLA where the $\delta$-differential and the Lie bracket are given by $s$ and $\ab{\cdot}{\cdot}$ respectively, and both are understood acting on local functionals. The DGLA deformation theory recalled in Section~\bref{sec:deform-theory}  can now be applied. The resulting deformation theory is of course just the one developed in~\cite{Barnich:1993vg,Barnich:1993pa} from the gauge theoretical perspective.

The triple $(H^{n,0}(d_h),s,(\cdot,\cdot))$ is a shifted DGLA.  The Maurer-Cartan elements of this DGLA are equivalence classes $[T]$ of ghost degree $0$ such that $s[T] + \frac{1}{2}([T],[T]) = 0$. They define the deformations of the BV Lagrangian (master action) 
\begin{equation}
	(s+([T],\cdot))^2 = 0 \Leftrightarrow s[T] + \frac{1}{2}([T],[T]) = 0.
\end{equation}
Thus, $s' = s+([T],\cdot)$ is a deformation of the BRST differential $s$. Its associated BV Lagrangian is just $\cL_{BV}+T$. Recall that we adapt the usual assumption  that the symplectic structure is intact under the deformation.  

\subsection{Presymplectic BV-AKSZ formulation}

The geometrical setup for various versions of the BV-AKSZ approach is provided by graded geometry and, more specifically, fibre bundles where not only fibres but also base spaces are graded manifolds. Given a fibre bundle $\pi: E\to \cX$, let $\cI$ be the ideal in $ \bigwedge^\bullet(E) $ generated by the elements of the form $\pi^*\alpha$, with $\alpha \in \bigwedge^{k>0}(\cX)$. The quotient algebra $ \bigwedge^\bullet(E) /\cI$ can be identified with vertical forms on $E$. Note that the de Rham differential preserves $\cI$ and hence descends to the quotient. The same applies to $L_V$ with $V$ a projectable vector
field on $E$. Recall that $V$ is projectable if $ V \circ \pi^* = \pi^* \circ v $ for a vector field $v$ on $\cX$. Although interior product $\iota_V$ does not generally preserve $\cI$,  it does so if $V$ is vertical.

Let $X$ be a real spacetime manifold. Differential forms on $X$ can be thought of as functions on the graded manifold $T[1]X$. It comes equipped with a canonical vector field  $\dx$ which is just the de Rham differential of $X$ seen as a vector field on $T[1]X$. If $x^a$ are local coordinates on $X$ the associated coordinates on $T[1]X$ are $ \{x^a,\theta^a\} $, $\gh{\theta^a}=1$. In these coordinates one has $\dx=\theta^a\dl{x^a}$. In what follows we limit ourselves to fibre bundles over $T[1]X$, where $X$ is a real manifold which serves as a space-time manifold.  

The central object of our study is the presymplectic gPDE\footnote{In \cite{Grigoriev:2022zlq,Dneprov:2024cvt} this object was called weak presymplectic gauge PDE. Here we drop "weak" for shortness. Genuine, i.e. strict in contrast to weak, presymplectic gPDEs are not considered in this work.}:
\begin{definition}
	A \textit{presymplectic gauge PDE} $(E,Q,\omega,\cL,T[1]X)$ is a graded fibre bundle $\pi:E\rightarrow T[1]X$ equipped with a
    vector field $Q$, $\gh{Q}=1$, presymplectic structure $\omega$, $\gh{\omega}=\dim X-1$, and a function $\cL$, $\gh{\cL}=n$ such that:
	\begin{gather}
		Q\circ\pi^* = \pi^*\circ \dx, \qquad d\omega = 0, \\ 
        \label{iqomega}
        i_Q\omega+d\cL \in \cI, \qquad Q\cL+\hhalf \iota_Q\iota_Q \omega = 0\,. 
	\end{gather}
\end{definition}
Because we always denote spacetime by $X$, a shortened notation $(E,Q,\omega,\cL)$ is often employed below.  The 
data of presymplectic gPDE defines an action functional on the space of sections $T[1]X\to E$:
\begin{equation}
\label{pgpde-action}
S[\sigma]=\int_{T[1]X}\sigma^*(\chi)(\dx)+\sigma^*(\cL)\,,
\end{equation}
where $\chi$, $\omega=d\chi$ is the presymplectic potential.  Note that the ambiguity in $\chi \to \chi +d\lambda$ results in adding a total-derivative to the integrand of \eqref{pgpde-action}.
The gauge transformations of this action are determined by $Q$. Moreover,
the entire BV formulation of the underlying gauge theory is encoded in 
the data of presymplectic gPDE, see~\cite{Grigoriev:2022zlq,Dneprov:2024cvt} for further details and \cite{Alkalaev:2013hta,Grigoriev:2016wmk,Grigoriev:2020xec,Dneprov:2022jyn} for earlier but less general constructions. 

The following comments are in order: the symplectic structure defines
 the action only modulo boundary terms. Indeed, passing from $\chi$ to $\chi+d\alpha$ changes the integrand by a total derivative ($\dx$-exact term). In this sense it is probably preferable to define the theory by giving $(E,Q,\chi,\cL)$. Moreover, $\cL$ is determined by $Q,\chi$ up to field-independent terms. Although $\omega$ is typically not regular, it is required to be quasi-regular so that it determines a genuine BV symplectic structure on the symplectic quotient of the space of super-sections, see~\cite{Dneprov:2024cvt,Grigoriev:2024ncm} for more details.

The presymplectic structure on $E$ determines a suitable version of the kernel distribution. Namely, we take $\wker$ to be a distribution on $E$ generated by vertical vector fields $V$ satisfying $\iota_V\omega \in \cI$. By a distribution we mean a submodule of a module of vector fields so it is not necessarily regular (constant rank). Note that $Q^2 \in \wker$. Indeed, 
\begin{align}
	2\iota_{Q^2}\omega &= \iota_{[Q,Q]}\omega = [\mathcal{L}_Q,\iota_Q]\omega = 2\iota_Qd \alpha - d \iota_Q\alpha + dQ\h \\
	&= 2\iota_Qd \alpha - d \iota_Q\alpha + d(-\iota_Q\alpha) = 2\mathcal{L}_Q\alpha \in \mathcal{I},
\end{align}
where $ \iota_Q\omega + d\h = \alpha \in \mathcal{I} $, and we used that $ Q\h = -\iota_Q\alpha $:
\begin{equation}
	0 = Q\h + \frac{1}{2}\iota_Q\iota_Q\omega = Q\h + \frac{1}{2}\iota_Q(-d\h + \alpha)  = \frac{1}{2}(Q\h + \iota_Q\alpha).
\end{equation}
Furthermore, adding a vertical vector field $V\in \wker$ to $Q$ does not affect the defining conditions of a presymplectic gPDE. Indeed, for $Q^\prime=Q+V$ we have:
\begin{align}
	\iota_{Q'}\omega + d\h = \iota_{Q}\omega + d\h, \quad Q'\h + \frac{1}{2}\iota_{Q'}\iota_{Q'}\omega = Q\h + \frac{1}{2}\iota_Q\iota_Q\omega\,.
\end{align}
It follows, $Q$ can be regarded as an equivalence class of vector fields modulo $\wker$.

It is convenient to introduce a subalgebra $\algA\subset \cC^\infty(E)$ of functions annihilated by $\wker$. In particular, $Q$ preserves $\algA$ and is nilpotent there, i.e. $Q^2f=0$ for any $f \in \algA$. Furthermore, it makes sense to speak about $Q$-cohomology in $\algA$, or more generally $L_Q$ cohomology in the subalgebra of forms preserved by $L_V$, for all $V \in \wker$.

\section{Consistent deformations of presymplectic gPDEs}
\label{sect:main}

\subsection{General setup}
Although a presymplectic BV-AKSZ system does not directly determine a DGLA that can be used in studying consistent deformations\footnote{In principle, a presymplectic gPDE  determines \cite{Grigoriev:2022zlq,Dneprov:2024cvt} a local BV system which in turn defines a DGLA, but this is not what we are after now.}, it turns out that one can still develop an appropriate  analogue of the deformation theory.

Let $(E,Q_0,\omega,\h_0)$ be a linear presymplectic gPDE. We are interested in its consistent deformations. We restrict ourselves to the deformations that do not affect the presymplectic structure $\omega$. Such deformations preserve the derivative order of the system and the number of degrees of freedom. Of course, this is not the most general class of deformations but it is well-defined and is in a certain sense natural. Note that even if we deform the presymplectic structure we can always undo the deformation by a suitable coordinate change provided that the structure satisfies certain regularity conditions and its rank is not affected by the deformation.

Let $\deg_\psi$ denote the homogeneity degree in fields (fibre coordinates).
Because our starting point system is linear we have $\deg_\psi (Q_0) = 0$, $\deg_\psi (\h_0) = 2$ and $\deg_\psi(\omega) = 2$. We refer to the homogeneity degree in fields as to \textit{order} from now on. We intend to complete $Q_0$ and $\h_0$ to
\begin{align}
	Q = Q_0 + Q_1 + \dots,\quad \h = \h_0 + \h_1 + \dots\,,
\end{align}
where
\begin{equation}
    \deg_\psi \h_i = \deg_\psi Q_i + 2 = i + 2\,,
\end{equation}
in such a way that $(E,Q,\omega,\cL)$ is still a presymplectic gPDE. In particular this means that $Q$ should project to $\dx$, i.e. $\pi^*\circ \dx = Q\circ \pi^*$ and hence all $Q_{i}$, $i \geq 1$ should be vertical, since $Q_0$ projects to $\dx$.

It is useful to introduce the following conventions for the partial sums
\begin{align}
	Q_{(i)} \coloneqq Q_0 + \dots + Q_i, \qquad \h_{(i)} \coloneqq \h_0 + \dots + \h_i
\end{align}
and denote \textit{the order-$k$ component of $X$} by $\left.X\right|_k$. Furthermore,   $X = \order{k}$ means that \textit{$X$ is of order $k$ or higher}. In other words, if $X=\order{k}$, then $X|_{\leq k-1} = 0$. 

We have the following:
\begin{proposition}\label{prop:main-rec}
Given a linear pgPDE $(E,Q_0,\omega,\h_0)$, consider its deformation  $(E,Q_{(N)},\omega,\h_{(N)})$ such that $Q_i, \h_i$ fulfill
	\begin{gather}\label{eq_build_def_1_2}
		\iota_{Q_i}\omega + d\h_i  \in \mathcal{I},  \\
        \label{eq_build_def_1_1}
		Q_0\h_i = -\frac{1}{2}\sum_{k=1}^{i-1}Q_k \h_{i-k}
	\end{gather}
	for all $1\leq i\leq N$. Then the defining relations of  pgPDE are satisfied up to order $N+3$:
	\begin{align}
    \label{1st-recursive}
		\iota_{Q_{(N)}}\omega + d\h_{(N)} \in \cI, \qquad Q_{(N)}\h_{(N)} + \frac{1}{2}\iota_{Q_{(N)}}\iota_{Q_{(N)}} \omega = \order{N+3}\,.
	\end{align}
    \end{proposition}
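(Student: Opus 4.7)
The plan is to prove the two parts of \eqref{1st-recursive} separately. The first part, $\iota_{Q_{(N)}}\omega + d\h_{(N)} \in \cI$, is immediate from linearity together with hypothesis \eqref{eq_build_def_1_2}: since $\cI$ is closed under addition and $\iota$, $d$ distribute over the finite sums $Q_{(N)}=\sum_{i=0}^N Q_i$, $\h_{(N)}=\sum_{i=0}^N \h_i$, one gets a sum of elements of $\cI$, hence an element of $\cI$.

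For the second part I will exploit the $\deg_\psi$-grading to reduce the claim to the recursion \eqref{eq_build_def_1_1} order by order. Setting $F:=Q_{(N)}\h_{(N)}+\tfrac{1}{2}\iota_{Q_{(N)}}\iota_{Q_{(N)}}\omega$, the weights $\deg_\psi Q_i=i$ and $\deg_\psi \h_i=i+2$ give
$$F\big|_{k}=\sum_{\substack{p+q=k-2\\ 0\leq p,q\leq N}}\Bigl(Q_p\h_q+\tfrac{1}{2}\iota_{Q_p}\iota_{Q_q}\omega\Bigr),$$
and whenever $k\leq N+2$ the upper bound $p,q\leq N$ is automatic from $p+q=k-2$. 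The aim is to show $F|_k=0$ for all $k\leq N+2$.

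The computation will rely on converting $Q_p\h_q$ into interior products. From \eqref{eq_build_def_1_2} we may write $\alpha_q:=\iota_{Q_q}\omega+d\h_q\in\cI$, so that
$$Q_p\h_q=\iota_{Q_p}d\h_q=-\iota_{Q_p}\iota_{Q_q}\omega+\iota_{Q_p}\alpha_q.$$
Two structural observations do most of the work. First, for $p\geq 1$ the component $Q_p$ is vertical, and a 1-form in $\cI$ is locally of the form $\sum_i f_i\,\pi^*\eta_i$ with $\eta_i$ a 1-form on the base, so it is annihilated by vertical insertions: $\iota_{Q_p}\alpha_q=0$ whenever $p\geq 1$. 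Second, since $\omega$ is an even 2-form and all $Q_p$ are odd, $\iota_{Q_p}\iota_{Q_q}\omega$ is symmetric in $p\leftrightarrow q$. Using these, the $Q_0$-containing cross-terms in $\sum Q_p\h_q$ (namely $Q_i\h_0=-\iota_{Q_i}\iota_{Q_0}\omega=-\iota_{Q_0}\iota_{Q_i}\omega$) exactly cancel the symmetric double-count $\iota_{Q_0}\iota_{Q_i}\omega$ arising from the $(0,i)$ and $(i,0)$ entries of $\tfrac{1}{2}\sum\iota_{Q_p}\iota_{Q_q}\omega$; the interior-product terms with $p,q\geq 1$ then collapse to $-\tfrac{1}{2}\sum_{p=1}^{i-1}\iota_{Q_p}\iota_{Q_{i-p}}\omega=\tfrac{1}{2}\sum_{p=1}^{i-1}Q_p\h_{i-p}$. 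What remains is
$$F\big|_{k}=Q_0\h_i+\tfrac{1}{2}\sum_{p=1}^{i-1}Q_p\h_{i-p},\qquad i:=k-2,$$
which vanishes for $1\leq i\leq N$ by hypothesis \eqref{eq_build_def_1_1}. The case $i=0$ (i.e.\ $k=2$) reduces to $F|_2=Q_0\h_0+\tfrac{1}{2}\iota_{Q_0}\iota_{Q_0}\omega=0$, which holds because $(E,Q_0,\omega,\h_0)$ is by assumption already a presymplectic gPDE; lower $k$ give empty sums.

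The main delicate step I expect is tracking the $Q_0$-contributions: the recursion \eqref{eq_build_def_1_1} contains no $\alpha$-dependence, yet a priori $\iota_{Q_0}\alpha_i\neq 0$ because $Q_0$ is not vertical. The non-trivial content of the argument is that the combination of verticality of $Q_p$ for $p\geq 1$ with the graded symmetry of $\omega$ on odd vector fields makes all $\alpha$-dependent terms cancel out of $F|_k$ before \eqref{eq_build_def_1_1} is invoked. Getting the signs and the symmetric double-counting right is the only thing requiring care; once done, the result is immediate.
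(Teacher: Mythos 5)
Your proof is correct and takes essentially the same route as the paper's: decompose the defining relations into $\deg_\psi$-homogeneous components and use the verticality of $Q_l$ for $l\geq 1$ (so that $\iota_{Q_l}$ kills one-forms in $\cI$) to convert $\iota_{Q_l}\iota_{Q_m}\omega$ into $-Q_l\h_m$, which is exactly the one nontrivial point the paper's proof singles out. Your version just spells out the bookkeeping of the cross-terms with $Q_0$ and the symmetric double-counting, which the paper leaves implicit.
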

\begin{proof} The proof is done by analysing the homogeneous components of the defining relations. The only nontrivial point is that in obtaining the second relation one employs $\iota_{Q_l}\iota_{Q_m}\omega=-Q_l\cL_m$, $l\geq 1$,  which holds because
    $\iota_{Q_m}\omega=-d\cL_m+\cI$ and $Q_l$ is vertical for $l\geq 1$.
\end{proof}
Note that $\cL_i$, belong to $\algA$ for $i\geq 1$. Indeed, applying $\iota_V$, $V\in \wker$ to the first equation of \eqref{1st-recursive} one finds $V\cL_i=-\iota_{Q_i}\iota_V\omega=0$ because $Q_i$ is vertical for $i\geq 1$. Another important property of the deformation $(E,Q_{(N)},\omega,\h_{(N)})$ is that it satisfies the consistency condition. Namely,
\begin{proposition}\label{prop_consistency}
A collection $\{ Q_0,\dots,Q_{N}; \h_0,\dots,\h_{N} \}$ such that \eqref{eq_build_def_1_2} and  \eqref{eq_build_def_1_1} are fulfilled for all $1\leq j\leq N$ satisfies the consistency condition:
	\begin{align}
		\label{eq_consistency_1}
		Q_0\left( \sum_{k=1}^{N} Q_k \h_{N+1-k} \right) = 0\,.
	\end{align}
    \end{proposition}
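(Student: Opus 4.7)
My plan is to recast the recursion in terms of a bracket induced by $\omega$ on the Hamiltonian functions $\cL_k$ and then execute the standard DGLA-type cancellation. For $l, m \geq 1$, the identity $\iota_{Q_l}\iota_{Q_m}\omega = -Q_l \cL_m$ (from the proof of Proposition~\bref{prop:main-rec}) lets me define a graded-symmetric bracket $\ab{\cL_l}{\cL_m} := \iota_{Q_l}\iota_{Q_m}\omega$ on the $\cL_k \in \algA$, and rewrite the recursion~\eqref{eq_build_def_1_1} as $Q_0 \cL_r = \hhalf \sum_{k=1}^{r-1}\ab{\cL_k}{\cL_{r-k}}$. Correspondingly, the quantity to be annihilated becomes $\sum_{k=1}^N Q_k\cL_{N+1-k} = -\sum_{k=1}^N \ab{\cL_k}{\cL_{N+1-k}}$.

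The key technical lemma is that $Q_0$ acts as a graded derivation of $\ab{\cdot}{\cdot}$ on $\algA$, i.e., $Q_0\ab{f}{g} = \ab{Q_0 f}{g} - \ab{f}{Q_0 g}$ for $f, g \in \algA$. To establish this I would apply $L_{Q_0}$ to $\iota_{V_f}\omega + df \in \cI$ and use that $Q_0$ is projectable (hence $L_{Q_0}\cI \subseteq \cI$) together with $L_{Q_0}\omega = d\alpha_0 \in \cI$ (which follows from $d\omega = 0$ and $\iota_{Q_0}\omega + d\cL_0 = \alpha_0 \in \cI$). The Cartan formula for $L_{Q_0}\iota_{V_f}\omega$ then yields $\iota_{[Q_0,V_f]}\omega + d(Q_0 f) \in \cI$, which identifies $[Q_0,V_f]$ as a Hamiltonian vector field of $Q_0 f$ modulo $\wker$. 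The derivation identity then follows from the graded commutation $Q_0 V_f = [Q_0, V_f] - V_f Q_0$ together with $V_h x = -\ab{h}{x}$, evaluating on $g \in \algA$ where $\wker$ acts trivially.

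Granted the derivation property, substituting the recursion into the expression for $Q_0 \bigl(\sum_{k=1}^N Q_k\cL_{N+1-k}\bigr)$ and handling the boundary case $Q_0\cL_1 = 0$ via the empty-sum convention would give
\begin{equation*}
Q_0\!\left(\sum_{k=1}^N Q_k\cL_{N+1-k}\right) = -\hhalf \sum_{\substack{a+b+c = N+1 \\ a,b,c \geq 1}}\!\bigl[\ab{\ab{\cL_a}{\cL_b}}{\cL_c} - \ab{\cL_a}{\ab{\cL_b}{\cL_c}}\bigr].
\end{equation*}
The graded symmetry $\ab{\cL_a}{\ab{\cL_b}{\cL_c}} = \ab{\ab{\cL_b}{\cL_c}}{\cL_a}$ combined with the cyclic relabeling $(a,b,c)\mapsto(b,c,a)$ of the triple summation identifies the two inner summands, so the bracketed difference vanishes term by term and the consistency condition follows.

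The main obstacle will be the careful $\cI$-bookkeeping in the derivation identity: the non-verticality of $Q_0$ produces $\cI$-valued remainders in the Cartan manipulations that are not individually zero but are either absorbed in $\cI$ (harmless at the level of defining Hamiltonian vector fields up to $\wker$) or killed when paired with vertical Hamiltonian vector fields acting on elements of $\algA$. It is precisely this point where the presymplectic setting deviates from the textbook symplectic BV case, and where one should verify that $Q_0 f$ and $Q_0 g$ indeed admit Hamiltonian vector fields (namely $[Q_0,V_f]$ and $[Q_0,V_g]$ modulo $\wker$) so that the brackets on the right-hand side of the derivation identity are well-defined.
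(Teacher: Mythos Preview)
Your approach is correct and genuinely different from the paper's. The paper proves a single ``master identity'': for any odd $q$ with $\iota_q\omega+dl\in\cI$ and $q^2$ vertical, one has $q\bigl(\hhalf\iota_q\iota_q\omega+ql\bigr)=0$. Taking $q=Q_{(N)}$, $l=\cL_{(N)}$ and extracting the order-$(N{+}3)$ component then yields the consistency condition directly. Your route is instead componentwise: you set up the symmetric bracket $\ab{\cL_l}{\cL_m}=\iota_{Q_l}\iota_{Q_m}\omega$ on Hamiltonian functions, establish that $Q_0$ is a graded derivation of it (by showing $[Q_0,V_f]$ is Hamiltonian for $Q_0f$ modulo $\wker$, which is the presymplectic substitute for the usual statement), and then carry out the standard DGLA cancellation. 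Both arguments hinge on the same Cartan-calculus manipulations with $\cI$; the paper packages them into one identity for the partial sum, while you unpack the underlying algebraic structure. Your version has the pedagogical advantage of making explicit that the recursion is governed by a genuine (if presymplectically weakened) DGLA, which clarifies the remark in the introduction that the deformation theory here is ``slightly unusual''; the paper's version is shorter and avoids having to verify that the intermediate objects $Q_0\cL_k$ and $\ab{\cL_a}{\cL_b}$ themselves admit Hamiltonian vector fields (which they do, via $[Q_0,Q_k]$ and $-[Q_a,Q_b]$ respectively, but this requires a separate check in your approach). One small point of phrasing: the difference in your final display does not vanish ``term by term'' but only after the full symmetric summation over $a+b+c=N+1$; and your derivation identity should be stated for $f,g$ admitting Hamiltonian vector fields rather than for all of $\algA$, since existence of such vector fields is not automatic in the presymplectic setting.
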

\begin{proof}
Let $q$, $\gh{q}=1$, be an odd vector field satisfying $\iota_q \omega+d l\in\cI$ and such that $q^2$ is vertical. The following identity holds:
\begin{equation}
\label{mcons}
q(\hhalf \iota_q\iota_q\omega+ql)=0\,.
\end{equation}
Indeed, the first term gives
\begin{multline}
\label{1st-aux}
\hhalf L_q \iota_q\iota_q \omega=2\iota_{q^2}\iota_q\omega+\hhalf \iota_q\iota_q  L_q  \omega= \\
2\iota_{q^2}(-dl+\cI)+\hhalf \iota_q\iota_q  L_q  \omega= -2q^2l +\hhalf \iota_q\iota_q  L_q  \omega\,,
\end{multline}
where we used that $q^2$ is vertical. Furthermore,
\begin{multline}
d(\iota_q\iota_qL_q\omega)=
-\iota_qL_qL_q\omega-L_q\iota_qL_q\omega=
\\
-\iota_{\commut{q}{q}}L_Q\omega-2\iota_qL_{q^2}\omega=-2 L_{q^2}\iota_q \omega =
2d (q^2 l)+\cI \,.
\end{multline}
It follows $d(\hhalf \iota_q\iota_qL_q\omega-q^2l)\in \cI$ and hence has the form $\pi^*g$ for some $g\in \cC^\infty(T[1]X)$. It follows, $g=0$ because  there are no nontrival functions of ghost degree $n+1$ on $T[1]X$. This shows that $\hhalf \iota_q\iota_qL_q\omega=q^2l$ which together with \eqref{1st-aux} gives~\eqref{mcons}.

Let us then take $q=Q_{(N)}=Q_0+\ldots+Q_N$ and
$l=\cL_{(N)}=\cL_0+\ldots+\cL_N$ satisfying~\eqref{eq_build_def_1_2} and \eqref{eq_build_def_1_1}. Using \eqref{1st-recursive} one finds
\begin{multline}
\left(Q_{(N)}\left(\frac{1}{2}\iota_{Q_{(N)}}\iota_{Q_{(N)}} \omega +Q_{(N)}\cL_{(N)}\right)\right)\Big|_{N+3}=\\
Q_0\left(\left(\frac{1}{2}\iota_{Q_{(N)}}\iota_{Q_{(N)}} \omega+Q_{(N)}\cL_{(N)}\right)\Big|_{N+3}\right)=
\hhalf Q_0 \left( \sum_{k=1}^{N} Q_k \h_{N+1-k} \right)
\end{multline}
where in the last equality we used \eqref{eq_build_def_1_1} and the fact that $Q_k$ is vertical for $k>0$.
\end{proof}

The above Propositions determine a recursive procedure for constructing consistent deformations.  Indeed, starting from a linear presymplectic gPDE $(E,Q_0,\omega,\h_0)$ one can attempt constructing   $Q_i,\cL_i$ by solving \eqref{eq_build_def_1_1} and \eqref{eq_build_def_1_2} order by order. Unless one reaches an obstruction, i.e. at some order the cocycle in~\eqref{eq_consistency_1} is nontrivial in $Q_0$-cohomology or equation \eqref{eq_build_def_1_2} does not have a solution, the procedure either stops at certain order, giving a presymplectic gPDE with vertices of finite order, or produces an infinite formal deformation. In the later case one usually proves that the deformation is unobstructed at higher orders.

Conversely, given a presymplectic gPDE $(E,Q,\omega,\h)$ and its particular solution $\sigma$, the expansion of $Q,\cL$ around $\sigma$ gives a consistent deformation of the linearization of $(E,Q,\omega,\h)$ around $\sigma$. Details on how the perturbative expansion of (presymplectic) gPDEs works can be found in~\cite{Grigoriev:2024ncm}.

\subsection{Recursive  construction }
Now we explicitly spell-out the recursive procedure for analysing consistent deformations. Given a linear presymplectic gPDE $(E,Q_0,\omega,\h_0)$, equations \eqref{eq_build_def_1_1} and \eqref{eq_build_def_1_2} for $i=1$ restrict the first order deformations. Namely, 
\begin{align}
	\iota_{Q_1}\omega + d \h_1 &\simeq 0 \label{eq_1stdefs_main1} \\
	Q_0\h_1 &= 0. \label{eq_1stdefs_main2}
\end{align}
where $\simeq$ denotes equality modulo $\cI$, i.e. $A\simeq 0$ means that $A\in \cI$.
The strategy is then to look for $\cL_1$ of ghost degree $n$, cubic in fields, and  satisfying \eqref{eq_1stdefs_main2}. The first condition forces $\cL_1$ to belong to $\algA$. Moreover, vector field $Q_0$ is nilpotent in $\algA$  and hence $\cL_1$ belongs to the cohomology $H^n(Q_0,\mathcal{A})$. As we are going to see later, if $\cL_1$ is a  representative of a trivial cohomology class, the deformation it determines is trivial.

The condition $\cL_1 \in \algA$ is necessary for \eqref{eq_1stdefs_main1} to have a solution for $Q_1$ but can be not sufficient in general. If $\omega$ is regular it is easy to check that $Q_1$ must exist, at least locally. However, in applications $\omega$ is often not regular but satisfies the more general condition of quasi-regularity, so that the existence of $Q_1$ is not guaranteed and might result in an additional obstruction. We say that $\cL_1 \in \algA$ such that $Q_0\cL_1=0$ and $d\cL_1 \in \im(\omega)$ represents an \textit{infinitesimal deformation} of our initial system. Inequivalent infinitesimal deformations constitute the tangent space to the moduli space of deformations.

Given an infinitesimal deformation $(\cL_1,Q_1)$ the analysis of higher orders is done order by order, just like in the case of DGLA. However, at each order the analysis is slightly different. More precisely, suppose we have constructed $\{ Q_0,\dots,Q_{N-1}; \h_0,\dots,\h_{N-1} \}$ such that \eqref{eq_build_def_1_1} and \eqref{eq_build_def_1_2} are satisfied for $0<i<N$. Then we are looking for $\h_N \in \algA$, such that  $ \gh{\h_N} = n $, $ \deg_\psi(\h_N) = N+2 $ and  Eqs.  \eqref{eq_build_def_1_1} and \eqref{eq_build_def_1_2} hold for $i=N$, i.e. 
	\begin{gather}
    \label{eq_method_assumptions_1}
		Q_0 \h_N+\frac{1}{2} B_N = 0\,, \qquad
        B_N \coloneq \sum_{k=1}^{N-1} Q_k \h_{N-k}\,,\\
		\iota_{Q_N} \omega + d \h_N \simeq 0\,.
        \label{eq_method_assumptions_2}
	\end{gather}

Thanks to the first equation, $\h_N \in \algA$, so that the necessary condition for the existence of a solution to \eqref{eq_method_assumptions_1} is that $B_N$ is $Q_0$-closed. We refer to the requirement $Q_0 B_N=0$ as the \textit{consistency condition}. It is satisfied thanks to the Proposition~\bref{prop_consistency}.

A sufficient condition for the existence of a solution to \eqref{eq_method_assumptions_1} is that $B_N$ is $Q_0$-exact. If the cohomology class of $B_N$ is nontrivial we say that the deformation is \textit{obstructed}. In other words, possible obstructions are non-trivial elements of $H^{n+1}(Q_0,\mathcal{A})$. If $B_N$ is nontrivial, $\{ Q_1,\dots,Q_{N-1}; \h_1,\dots,\h_{N-1} \}$ should satisfy some additional conditions in order to make $B_N$ trivial in $H^{n+1}(Q_0,\mathcal{A})$. If it is not possible to satisfy these conditions for a nontrivial $\cL_1$ one usually says that the initial system is {\textit{rigid}}, i.e. does not admit  deformations as a presymplectic gPDE.

Finally, having obtained $\cL_N \in \algA$ that solves \eqref{eq_method_assumptions_1} we turn to the remaining equation \eqref{eq_method_assumptions_2}. It is identical to the analogous problem at the first order, which was discussed above. 
If \eqref{eq_method_assumptions_2} admits a solution for $Q_N$ this completes the $N$-th step of the recursive construction.

It can §happen that the deformation already constructed is complete in the sense that $Q_{(N)},\cL_{(N)}$ satisfy~\eqref{iqomega} and hence defines a presymplectic gPDE. In this case, the constructed deformation is finite.  If this does not happen or one is interested in studying further deformations one repeats the procedure at order $N+1$ and so on. This completes the detailed description of the recursive procedure.

In the next section we will detail the first steps while showing that \eqref{eq_consistency_1} is fulfilled at first orders.

\subsection{Lowest orders in more details}

The first order deformations have been explicitly considered above so that  we start with the second order. Let us spell-out explicitly  the derivation of  \eqref{eq_consistency_1} at order $2$. We have 
\begin{multline}
	Q_0B_1=Q_0( Q_1 L_1 ) = \mathcal{L}_{Q_0}Q_1\h_1 =  \mathcal{L}_{Q_0}\iota_{Q_1}d\h_1 = \mathcal{L}_{Q_0}\iota_{Q_1}(-\iota_{Q_1}\omega + \alpha_1) \\
	= -\mathcal{L}_{Q_0}\iota_{Q_1}\iota_{Q_1}\omega = -\iota_{[Q_0,Q_1]}\iota_{Q_1}\omega - \iota_{Q_1}\mathcal{L}_{Q_0}\iota_{Q_1}\omega \\
	 = - \iota_{Q_1}\mathcal{L}_{Q_0}\iota_{Q_1}\omega = -\iota_{Q_1}\mathcal{L}_{Q_0}(-d\h_1 + \alpha_1) = -\iota_{Q_1}\mathcal{L}_{Q_0}\alpha_1 = 0\,,
\end{multline}			
where we used $Q_0\cL_1=0$ and $\iota_{[Q_1,Q_0]}\omega\in \mathcal{I}$,
which holds thanks to:
\begin{align}
	\iota_{[Q_1,Q_0]}\omega = \mathcal{L}_{Q_0}\iota_{Q_1}\omega = Q_0\h_1 + \mathcal{L}_{Q_0}\alpha_1 = \mathcal{L}_{Q_0}\alpha_1 \in \mathcal{I} \label{eq_i10_omega}\,.
\end{align}
If we don't step into an obstruction, we can already solve
	\begin{equation}
		Q_0 \h_2 =  -\frac{1}{2} Q_1 \h_1
	\end{equation}
for $ \h_2 \in \algA$ with $ \deg_\psi(\h_2) = 4$, $\gh{\cL_2}=n$.
	
We then proceed to obtain a vertical $Q_2$, $ \deg_\psi(Q_2) = 2$ from
	\begin{equation}
		\iota_{Q_2} \omega + d \h_2 \simeq 0\,.
	\end{equation}
Again, one should note that $Q_2$ is defined up to a vector field from $\wker$.

In the case where $Q_1\h_1$ represents an obstruction, i.e., it is non trivial in $Q_0$-cohomology, one usually interprets this as a condition restricting the first order deformation. If it was not possible to make $Q_1\cL_1$ trivial for a nontrivial $\cL_1$ one concludes that the initial system $(E,Q_0,\omega,\h_0)$ was rigid.

If the obstruction vanishes we check whether $Q_{(2)}$ and $\h_{(2)}$ satisfy the conditions to all orders, i.e. whether $(E,Q_{(2)},\omega, \cL_{(2)})$ is a presymplectic gPDE. If it is not, we continue to the next order.

\subsection{Trivial deformations}
\label{sec_triv_def}

Just like in the case of DGLA, among possible deformations of a presymplectic gPDE there are some that do not actually produce a new system as they can always be undone by a field redefinition. 

\begin{definition}
	Given a presymplectic gPDE $(E,Q,\omega,\h)$, a \textit{trivial formal deformation}
	is defined as: 
	\begin{align}
		 \cL^\prime &= \exp(V)\cL+\iota_{\dx}\beta,\quad  \label{eq_def_trivb_1} \\
	 Q^\prime &= \exp(V) Q \exp(-V), \label{eq_def_trivb_2}
	\end{align}	
    where the exponential is understood as a formal series and $V$ is a vertical vector field satisfying $\deg_\psi(V) > 0$, $\gh{V} = 0$ and  $\iota_V\omega +d H=-\alpha$ for some function $H$ and 1-form $\alpha \in \cI$. Moreover, $\beta \in \cI$ is a 1-form determined by $\alpha$ and $V$ through $\exp(L_V)\omega=\omega+d\beta$.
\end{definition}
Note that $\iota_V\omega +d H=-\alpha$ implies $L_V\omega=d\alpha$ and hence there exists $\beta \in \cI$ such that $\exp(L_V)\omega=\omega+d\beta$. Here we made use of $L_V\cI\subset \cI$ for $V$ vertical. 

Note also that if $\deg_\psi(V)\geq k, k>0$, one finds that  $\cL$ and $\cL^\prime$ coincide up to order $k+2$. At order $k+2$ one gets
\begin{equation}
\cL_k-\cL^\prime_k=Q_0 H_k\,,
\end{equation}
where we made use of $\beta_k=\alpha_k$.
\begin{proposition}
	\label{prop_finite_triv_transf}
	A trivial formal deformation $(E,Q^\prime,\omega,\cL^\prime)$ of a presymplectic gPDE $(E,Q,\omega,\h)$ is again a presymplectic gPDE, in the sense that
	\begin{equation}
		\iota_{Q^\prime}\omega + d\cL^\prime \simeq 0, \qquad Q^\prime \cL^\prime + \frac{1}{2}\iota_{Q^\prime}\iota_{Q^\prime}\omega = 0. \label{eq_prop_finite_triv_transf}
	\end{equation}
    \end{proposition}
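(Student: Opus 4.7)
The plan is to verify both defining relations by exploiting the covariance of a presymplectic gPDE under the formal diffeomorphism $\exp(V)$, then compensating for the shift $\omega \mapsto \omega + d\beta$ by means of the extra term $\iota_{\dx}\beta$ in $\cL'$. The basic tools are the operator identities
\begin{equation*}
    \iota_{Q'} = \exp(L_V)\iota_Q\exp(-L_V), \qquad L_{Q'} = \exp(L_V)L_Q\exp(-L_V),
\end{equation*}
which follow from $[L_V, \iota_Q] = \iota_{[V,Q]}$ and the BCH expansion $Q' = \exp(\ad V)Q$, together with the fact that $\exp(L_V)$ preserves the ideal $\cI$ whenever $V$ is vertical.

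First, applying $\exp(L_V)$ to both defining relations of $(E, Q, \omega, \cL)$ gives, via the intertwining identities above, that the \textit{active} system $(E, Q', \omega + d\beta, \exp(V)\cL)$ is automatically a presymplectic gPDE. The master equation survives the transformation because $\exp(L_V)$ acts on the $0$-form $\iota_Q\iota_Q\omega$ exactly as the differential operator $\exp(V)$. Second, I transfer to $(E, Q', \omega, \cL')$ by writing $\omega = (\omega + d\beta) - d\beta$. For the first equation $\iota_{Q'}\omega + d\cL' \in \cI$, Cartan's formula $\iota_{Q'}d\beta = L_{Q'}\beta - d\iota_{Q'}\beta$ combined with $L_{Q'}\beta \in \cI$ (using $\beta \in \cI$ and the projectability of $Q'$ onto $\dx$) reduces the task to a $d$-exact combination of $\iota_{Q'}\beta$ and $\iota_{\dx}\beta$ lying in $\cI$, which holds because $Q'$ splits as a horizontal lift of $\dx$ plus a vertical correction and vertical contractions of $1$-forms in $\cI$ again lie in $\cI$. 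For the master equation, an analogous rearrangement reduces the problem to checking the exact identity
\begin{equation*}
    Q'\iota_{\dx}\beta + \tfrac12\,\iota_{Q'}\iota_{Q'}d\beta = 0,
\end{equation*}
verified via Cartan's formula together with the fact, already noted in the text preceding the proposition, that $(Q')^2 = \exp(\ad V)Q^2$ is vertical (hence lies in $\wker$).

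The main obstacle is this last identity, since the master equation must hold exactly and not just modulo $\cI$. The precise choice of the correction $\iota_{\dx}\beta$ is dictated by this exact cancellation; it admits the natural interpretation of a boundary-term compensation that leaves the action functional invariant under the formal field redefinition generated by $V$, corresponding to the change of presymplectic potential $\chi \mapsto \chi - \beta$ when passing from $\omega + d\beta$ back to $\omega$.
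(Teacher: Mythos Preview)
Your proposal is correct and follows essentially the same route as the paper: both first apply $\exp(L_V)$ to the defining relations to obtain the ``active'' system $(E,Q',\omega+d\beta,\exp(V)\cL)$, then pass back to $\omega$ using Cartan's formula together with the verticality of $(Q')^2$ to produce the key identity $\iota_{Q'}\iota_{Q'}d\beta = 2\,Q'(\iota_{Q'}\beta)$. The only cosmetic differences are that the paper writes the correction directly as $\iota_{Q'}\beta$ (which equals your $\iota_{\dx}\beta$ since $\beta\in\cI$ is a $1$-form and the vertical part of $Q'$ contracts it to zero---exactly your horizontal/vertical splitting argument), and that your displayed identity should read $Q'\iota_{\dx}\beta - \tfrac{1}{2}\iota_{Q'}\iota_{Q'}d\beta = 0$ with a minus sign.
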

	\begin{proof}

Consider the formal transformation $Q \to Q^\prime$, $\cL\to \cL^{\prime\prime}$ with
    \begin{gather}
    Q^\prime=\exp(V) Q \exp(-V)\,, \qquad
    \cL^{\prime\prime}=\exp(V)\cL\,.
    \end{gather}
    Applying $\exp{V}$ to both sides of the defining equations \eqref{iqomega} one gets
    \begin{equation}
    \label{1st-deformed}
       \iota_{Q^\prime}(\omega+d\beta) + d\cL^{\prime\prime} \simeq 0, \qquad Q^\prime \cL^{\prime\prime}+ \frac{1}{2}\iota_{Q^\prime}\iota_{Q^\prime}(\omega+d\beta) = 0\,,
    \end{equation}
where we used  $\iota_{Q^\prime}=\exp({L_V})\iota_Q \exp(-L_V)$ and that $\exp({L_V})\omega=\omega+d\beta$.

Next, by applying $\iota_{Q^\prime}$ to both sides of $\iota_{Q^\prime}d\beta-d \iota_{Q^\prime}\beta=L_{Q^\prime}\beta$
and using that $(Q^\prime)^2$ is vertical one finds $\iota_{Q^\prime}\iota_{Q^\prime}d\beta=2{Q^\prime}(\iota_{Q^\prime}\beta)$. Using this in \eqref{1st-deformed} and introducing $\cL^\prime=\cL^{\prime\prime}+\iota_{Q^\prime}\beta$ we indeed get~\eqref{eq_prop_finite_triv_transf}.
\end{proof}

Let us now get back to the recursive construction of deformations. At order $N+2$, the deformation is determined by the equation $Q_0 \cL_N +\half B_N=0$ along with $\iota_{Q_N}\omega+d\cL_N\in \cI$. The trivial solutions to the first condition are given by $\cL_N=Q_0 f_N$, where $\iota_{V_N}\omega+df_N\simeq 0$ for some vertical vector field $V_N$. It is clear from the above that such an infinitesimal trivial transformation can be achieved by applying a formal deformation determined by $V_N$ which is in turn determined by $f_N$.
More formally, 
\begin{prop}
Let $(E,Q,\omega,\cL)$ and $(E,Q^\prime,\omega^\prime,\cL^\prime)$ be two consistent deformations of the same linear system such that $Q_i=Q^\prime_i$ modulo $\wker$, $\cL_i=\cL^\prime_i$ for $0\leq i<N$ and $\cL_N-\cL^\prime_N=Q_0f_N$ with $f_N$ satisfying the condition that $\iota_{V_N}\omega+df_N\simeq 0$ for some vertical vector field $V_N$, $\deg_\psi V_N=N$. Then applying the trivial formal deformation generated by $V_N$ to $(E,Q,\omega,\cL)$ one finds that it agrees with $(E,Q^\prime,\omega^\prime,\cL^\prime)$ at order $N$ as well. 
\end{prop}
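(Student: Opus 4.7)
The plan is to apply the trivial formal deformation generated by $V_N$ to $(E,Q,\omega,\cL)$, obtaining $(E,\tilde{Q},\omega,\tilde{\cL})$ with $\tilde{Q}=\exp(V_N)Q\exp(-V_N)$ and $\tilde{\cL}$ the associated transformed Lagrangian, and then verify directly that $\tilde{Q}_i\equiv Q^\prime_i\pmod{\wker}$ and $\tilde{\cL}_i=\cL^\prime_i$ for all $0\leq i\leq N$. Since $\deg_\psi V_N=N>0$, the remark immediately following the definition of trivial formal deformation (``$\cL$ and $\cL^\prime$ coincide up to order $k+2$ when $\deg_\psi V\geq k$'') guarantees that the transformation does not affect $\cL$ or $Q$ at orders strictly below $N$; together with the hypothesis this yields $\tilde{\cL}_i=\cL_i=\cL^\prime_i$ and $\tilde{Q}_i=Q_i\equiv Q^\prime_i\pmod{\wker}$ for $i<N$. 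At order $N$, the same remark gives $\tilde{\cL}_N-\cL_N=-Q_0 f_N$, which combined with the hypothesis $\cL^\prime_N=\cL_N-Q_0 f_N$ immediately produces $\tilde{\cL}_N=\cL^\prime_N$.

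The core of the argument is the $Q$-component at order $N$. Expanding $\tilde{Q}=\exp(V_N)Q\exp(-V_N)$ and using $\deg_\psi V_N=N$, only the leading commutator contributes at order $N$: $\tilde{Q}_N=Q_N+[V_N,Q_0]$, since $[V_N,Q_i]$ has order $N+i\geq N+1$ for $i\geq 1$ and iterated commutators are of still higher order. It thus suffices to prove $\iota_{\tilde{Q}_N-Q^\prime_N}\omega\in\cI$. Subtracting the defining relations $\iota_{Q_N}\omega+d\cL_N\in\cI$ and $\iota_{Q^\prime_N}\omega+d\cL^\prime_N\in\cI$ gives $\iota_{Q_N-Q^\prime_N}\omega\simeq -d(Q_0 f_N)$.

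The remaining piece is handled via Cartan's formula,
\begin{equation}
\iota_{[Q_0,V_N]}\omega=L_{Q_0}\iota_{V_N}\omega-\iota_{V_N}L_{Q_0}\omega\,.
\end{equation}
Writing $\iota_{V_N}\omega=-df_N+\alpha$ with $\alpha\in\cI$, the first term equals $-d(Q_0 f_N)+L_{Q_0}\alpha$, and $L_{Q_0}\alpha\in\cI$ by projectability of $Q_0$. For the second, $L_{Q_0}\omega=d\iota_{Q_0}\omega=d(-d\cL_0+\alpha_0)=d\alpha_0\in\cI$ (using $d\omega=0$, the defining relation for $Q_0$, and $d\cI\subset\cI$), so $\iota_{V_N}L_{Q_0}\omega\in\cI$ because $\iota_V\cI\subset\cI$ for vertical $V$. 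Hence $\iota_{[V_N,Q_0]}\omega\simeq d(Q_0 f_N)$, and summing with the previous congruence gives $\iota_{\tilde{Q}_N-Q^\prime_N}\omega\in\cI$, i.e.\ $\tilde{Q}_N\equiv Q^\prime_N\pmod{\wker}$, completing the proof.

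The main technical subtlety, rather than a genuine obstacle, is the careful bookkeeping of stability of the ideal $\cI$ under the operations $L_{Q_0}$, $d$ and $\iota_V$ for vertical $V$; each such stability is a standard consequence of projectability of $Q_0$ and verticality, but must be invoked explicitly at several points in the congruence manipulations above.
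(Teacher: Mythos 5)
Your proof is correct and follows the same overall strategy as the paper: apply the trivial formal deformation generated by $V_N$, note that orders below $N$ are untouched, identify $\tilde{\cL}_N=\cL_N-Q_0f_N=\cL'_N$ via the remark following the definition of trivial formal deformations, and then deduce $\tilde{Q}_N\equiv Q'_N$ modulo $\wker$ from the first defining relation. The only place where you diverge is the last step. The paper disposes of it in one line: by Proposition~\bref{prop_finite_triv_transf} the transformed system is again a presymplectic gPDE with the same $\omega$, so the order-$N$ component of $\iota_{\tilde{Q}}\omega+d\tilde{\cL}\in\cI$ can be subtracted from the corresponding relation for $(Q',\cL')$, and since the Hamiltonians now agree at order $N$ this immediately gives $\iota_{\tilde{Q}_N-Q'_N}\omega\in\cI$. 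You instead expand $\tilde{Q}_N=Q_N+[V_N,Q_0]$ and verify $\iota_{[V_N,Q_0]}\omega\simeq d(Q_0 f_N)$ by hand via Cartan's formula; this is a correct, self-contained re-derivation of precisely the order-$N$ piece of Proposition~\bref{prop_finite_triv_transf}, at the cost of extra bookkeeping (your stability claims $d\cI\subset\cI$, $L_{Q_0}\cI\subset\cI$ for projectable $Q_0$, and $\iota_V\cI\subset\cI$ for vertical $V$ are all established in the preliminaries, so nothing is missing). Neither route has a gap; the paper's is shorter only because the heavy lifting was already packaged into the earlier proposition, while yours makes the mechanism explicit.
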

\begin{proof}
By applying the trivial formal deformation determined by $V_N$ one finds that now $\cL_N=\cL^\prime_N$. The only additional point is to check that after the transformation $Q_N=Q^\prime_N$ modulo $\wker$. Indeed, at order $N$ the first equation of \eqref{iqomega} applied to both systems imply $\iota_{Q_N}\omega-\iota_{Q^\prime_N}\omega\in \cI$ which, in turn, implies that $Q_N=Q^\prime_N$ modulo $\wker$.
\end{proof}
The above Proposition shows that at each order the nontrivial arbitrariness of the deformation is described by the cohomology of $Q_0$ in $\algA$.

\section{Examples}

\label{sec:applications}
\subsection{Chern Simons theory}
The starting point is the slight generalization of the Abelian Chern Simons. Namely, a generic linear theory with a 1-form field with values in a linear space and the usual Abelian gauge symmetry. If we denote by $A^i$ the 1-form field, the gauge invariant Lagrangian we have in mind is
\begin{equation}
    L=\half h_{ij}(A^idA^j+A^i\Gamma^j{}_kA^k)\,,
\end{equation}
where $h_{ij}$ is a constant invertible symmetric matrix and $\Gamma^j{}_k$ are fixed 1-forms satisfying $h_{ij}\Gamma^j{}_k+h_{kj}\Gamma^j{}_i=0$ and the zero-curvature equation $d\Gamma+\Gamma^2=0$. The gauge transformations are $\delta A=d\lambda$, where $\lambda$ is an unconstrained gauge parameter (arbitrary spacetime function). Here we restrict ourselves to local analysis and hence refrain from giving a global geometrical construction. Note that one can set $\Gamma=0$ locally by using a suitable transformation of the form $A^i\to A^j\Lambda^i_j(x)$. However, we keep
$\Gamma$ for generality as e.g. in the context of 3d (conformal) gravity models such a transformation is not legitimate as it sets to zero the frame field.

The minimal presymplectic BV-AKSZ formulation of the above system is a slight modification of the well-known AKSZ formulation of the Abelian Chern-Simons theory. The underlying bundle $E\rightarrow T[1]X$ is finite-dimensional with coordinates $ \{ x^a, \theta^a, C^i \} $, where $ \gh{\theta^a} = \gh{C^i} = 1 $, $ \gh{x^a} = 0 $ and $ \dim X = 3 $. The vector field $Q_0$ acts as follows:
\begin{align}
	Q_0 x^a = \theta^a, \quad Q_0 \theta^a = 0, \quad Q_0 C^i = \theta^a \Gamma^i_{aj} C^j.
\end{align}
The compatible symplectic structure is given by:
\begin{equation}
	\omega = d \chi = \half h_{ij}dC^i dC^j,
\end{equation}
where $h_{ij} = h_{ji}$. Finally, the ``covariant Hamiltonian" $\cL_0$ is given by 
\begin{equation}
\cL_0=\half h_{ij} C^i\Gamma^j{}_k C^k\,.
\end{equation}

The first-order deformation is given by $\cL_1$ of order $3$ and satisfying 
\begin{equation}
\label{q0e}
	Q_0 \h_1 = 0\,, \qquad \gh{\h_1}=3\,, \qquad \deg_\psi(\h_1)=3\,.
\end{equation}
It follows 
\begin{equation}
	\h_1 = \frac{1}{6}k_{ijk} C^i C^j C^k
\end{equation}
for some  totally antisymmetric $k_{ijk}(x)$. Equation \eqref{q0e} implies that $k_{ijk}$ is covariantly constant with respect to connection $\Gamma$. It's obvious that $ \h_1 $ is not $Q_0$-exact, which means that the deformation is nontrivial. 

Equation \eqref{eq_method_assumptions_2} reads as $\iota_{Q_1} \omega + d \h_1 \simeq 0$
and gives 
\begin{equation}
	Q_1 = -\frac{1}{2} C^i C^j k_{ij}{}^k \dl{C^k},
\end{equation}
where $k_{ij}{}^k = k_{ijl}h^{lk}$. Note that because $h_{ij}$ is invertible $\omega$ is symplectic (seen as a form on the fiber) and hence \eqref{eq_method_assumptions_2} always has a solution. 

In the next step we should look for $\cL_2$ of order $4$ in fields. However, such $\cL_2$ of ghost degree $3$ does not exist and hence, from \eqref{eq_method_assumptions_1}, we get at order 4 the condition $Q_1\cL_1=0$. Because $\omega$ is symplectic, it is a standard fact that this is equivalent to $(Q_1)^2=0$, which in turn implies that $k_{ij}{}^k$ are structure constants (though in our setup they are allowed to depend on $x$) of a Lie algebra, i.e. the Jacobi Identity
\begin{equation}
\label{Jacobi}
k_{ij}{}^l k_{lk}{}^n 
    +
    k_{jk}{}^lk_{li}{}^n
    +
    k_{ki}{}^lk_{lj}{}^n= 0\,.
\end{equation}
This imposes the additional condition on the first-order deformation. Note that $Q_1$ is the Chevalley-Eilenberg differential \eqref{ce_differential} of the Lie algebra defined by structure constants $k_{ij}{}^l$. Finally, it is easy to see that there are no nontrivial conditions at higher orders in fields so that we have arrived at a consistent deformed system. Explicitly:
\begin{align}
	\h &= \h^{(1)} = \h_0 + \h_1 =  \frac{1}{2} h_{ij}C^i\Gamma^j{}_l C^l+ \frac{1}{6}k_{ijk} C^i C^j C^k, \\
	Q &= Q^{(1)} = Q_0 + Q_1 = \theta^a(\pdv{x^a}+\Gamma^i_{aj}C^j\dl{C^i}) -\frac{1}{2} [C,C]^k \dl{C^k},
\end{align}
where $[C,C]^k \coloneqq C^i C^j k_{ij}{}^k$. If there are no auxiliary restrictions like nondegeneracy of the soldering-form component of $\Gamma$ in th case of gravity, one can locally set $\Gamma=0$ and make $k_{ijk}$ constant by the $x$-dependent linear transformation of $C^i$.

The analysis of this section is a generalization of that from~\cite{Barnich:2009jy,Grigoriev:2020lzu} to the case where the initial system is not exactly of AKSZ form.

\subsection{Yang-Mills theory}

We start with several copies of Maxwell theory. For simplicity we restrict to 4 dimensions but the generalisation to higher dimensions is straightforward. The minimal presymplectic gPDE describing the free theory is given by $T[1]X \times F$, where $X$ is the Minkowski space with coordinates $x^a$ and $F$ is a graded manifold with coordinates $C^i,F^i_{ab}=-F^i_{ba}$, $\gh{C^i}=1$, $\gh{F^i_{ab}}=0$. The $Q$ vector field  is given by
\begin{align}	
	\label{mmm}		
			Q_0 x^a = \theta^a\,,  \qquad 
			Q_0 \theta^a = 0 \\
			Q_0 C^i = \frac{1}{2} \theta^a \theta^b F^i_{ab}\,, \qquad
			Q_0 F^i_{ab} = 0\,.
\end{align}

The presymplectic structure is given by~\cite{Alkalaev:2013hta,Dneprov:2022jyn,Grigoriev:2022zlq}:
\begin{equation}
	\omega = \theta^{(2)}_{ab} h_{ij} d F^{iab} d C^j \,, \qquad \theta^{(2)}_{ab}=\frac{1}{2}\epsilon_{abcd}\theta^c\theta^d\,,
\end{equation}
where the matrix $h_{ij}$ is assumed symmetric and invertible. The covariant Hamiltonian defined through $\iota_{Q_{0}}\omega+d\cL_0\simeq 0$ is explicitly given by:
\begin{align}
	\cL_0=\frac{1}{4}\left( \theta^{(4)} h_{ij}F^{iab}F^j_{ab} \right)\,,  \qquad \theta^{(4)}=\frac{1}{4!}\epsilon_{abcd}\theta^a\theta^b\theta^c\theta^d\,.
\end{align}
Finally, it is easy to check that $Q_0\cL_0=\iota_{Q_0}\iota_{Q_0}\omega=0$ so that all axioms are satisfied and we have all the data of the linear presymplectic gPDE. Note that $Q_0$ is nilpotent in this case. Further details on the presymplectic BV-AKSZ formulation of YM can be found in~\cite{Grigoriev:2022zlq}.

First order deformations are determined by cocycles $\cL_1$. Taking into account that in addition to 
$Q_0\cL_1=0$, deformation $\cL_1$ must have ghost degree $4$ and order $3$ we have the following candidates:
\begin{itemize}
	\item $\cL_1=\theta^a\theta^b C^i C^j F^k_{ab} k_{ijk} $.\\
    In fact it is trivial. Indeed, $k_{ijk}$ can be assumed antisymmetric in $ij$. At the same time $Q_0\cL_1=0$ implies   that $k_{ijk}$ is antisymmetric in $jk$ as well. This in turn implies that $\cL_1$ can be represented as 
	\begin{equation}
		\theta^a\theta^b C^i C^j F^k_{ab} k_{ijk} = Q_0\left( \frac{2}{3}C^iC^jC^k k_{ijk} \right).
	\end{equation}

	\item $\cL_1= Y_a(x)\theta^a  C^3 $,\\
    which can not be rendered Poincar\'e invariant for nonvansihing $Y_a$.
    
    \item $\cL_1=k_{ijk}(F^i_{ab} F^j_{cd} F^k_{ef})Y^{abcdef} \theta^{(4)}$, where $Y^{abcdef}$ is Lorentz invariant\\
    It is easy to see that it does not involve ghosts and hence does not deform the gauge symmetries. Such deformations are described  by the invariants of the initial linear theory.

    \item $ \cL_1=\half\theta^{(2)}_{ab} C^i C^j F^{kab} k_{ijk} $.
\end{itemize}

Let us concentrate on the last candidate, which is the only interesting one. Note that one can assume $ k_{ijk} = - k_{jik} $. The cocycle condition reads as 
\begin{align}
	Q_0(\cL_1) \propto \theta^{(4)}F^{iab}C^j F^k_{ab} k_{ijk} = 0,
\end{align}
so that $k_{ijk} = -k_{kji}$ and hence $k_{ijk}$ is totally antisymmetric. It is easy to check that $\cL_1$ is not $Q_0$-exact and hence determines a nontrivial first order deformation. Indeed, by counting homogeneity in $F$ and $C$ one finds that it can only arise as $Q_0(C^3)$ but the explicit check shows that it doesn't happen unless $F$ is (anti)self-dual.

The first order deformation of $Q$ is determined by 
\begin{equation}
	\iota_{Q_1}\omega + d\h_1 \simeq 0\,,
\end{equation}
giving
\begin{equation}
	Q_1 =  F^{i}_{ab}C^j k_{ij}{}^k\pdv{F^k_{ab}} -\hhalf C^iC^j k_{ij}{}^k\pdv{C^i},
\end{equation}
where $k_{ij}{}^k \coloneqq k_{ijl} h^{lk}$.

The next order deformation is determined by  $\h_2$ with $ \deg_\psi (\h_2) = 4 $ and $ \gh{\h_2} = 4 $ such that
\begin{equation}
	Q_0 \h_2 = -\frac{1}{2} Q_1 \h_1  \label{eq_max_i2}\,.
\end{equation}
This equation says that
\begin{equation}
	Q_1 \h_1 = \theta_{ab}^{(2)} C^i C^j C^k F^{lab}k_{ij}{}^m k_{mkl}, \label{eq_ym_Q1L1}
\end{equation}
is trivial in the cohomology of $Q_0$.  The only possible candidates for $\cL_2$ are of the form $ \theta^a Y_a C^3 F $ and $ C^4 $. The first cannot be rendered Poincar\'e invariant. At the same time $Q_0\cL_2$ with 
\begin{equation}
	\cL_2 =C^iC^jC^kC^l q_{ijkl}
\end{equation}
can not coincide with $Q_1\cL_1$ unless $F$ is (anti)self-dual. This shows that  $Q_1\cL_1$ represents an obstruction.
In fact, $Q_1\cL_1$ vanishes if and only if $k_{ij}{}^k$ satisfies the Jacobi identity \eqref{Jacobi}. In this way we recover the presymplectic gPDE for the standard YM theory~\cite{Grigoriev:2022zlq,Dneprov:2022jyn}:
\begin{align}
	Q &= \theta^a \pdv{x^a} + \frac{1}{2}\theta^a\theta^b F^i_{ab}\pdv{C^i} + [F_{ab},C]^i\pdv{F^i_{ab}} - \frac{1}{2} [C,C]^i\pdv{C^i} \label{result_maxwell_Q},  \\
	\h  &= \frac{1}{4}\theta^{(4)}F^i_{ab} F_i^{ab} + \hhalf  \theta^{(2)}_{ab} F_i^{ab}  [C,C]^i  \label{result_maxwell_L},
\end{align}
where $[A,B]^k \coloneqq A^i B^j k_{ij}{}^{k}$.

\section{Conclusions}

In this work we have developed the deformation theory for local gauge theories within the  presymplectic BV-AKSZ approach. This has allowed us to reformulate the analysis of consistent deformations in terms of finite-dimensional super-geometrical objects -- the so-called presymplectic gauge PDEs. The approach has substantial technical advantages because it avoids working with quotient spaces such as local functions modulo total derivatives and modulo on-shell vanishing terms. One can also draw an analogy with the Hamiltonian formalism which is also ``on-shell''. More precisely, the presymplectic BV-AKSZ approach can be thought of as the BV-BRST extension of the covariant phase space formalism~\cite{Lee:1990nz,Wald:1999wa} phrased in terms of the geometrical theory of PDEs~\cite{Vinogradov1981,Krasil?shchik-Lychagin-Vinogradov}.

In order to simplify the discussion, we restricted ourselves to the expansion in homogeneity in fields, as this is the standard setup for the analysis of consistent interactions. Of course, the generalization to the expansion in general deformation parameter (e.g. coupling constant) is straightforward. 

It is well known that the deformation theory of gauge systems can be employed in the study of perturbative renormalization, see, e.g.~\cite{Piguet:1995er,Barnich:2000me}. The same should apply to the framework proposed in the present work. However, the only subtlety is the treatment of higher-derivative counterterms that would require certain modifications of the procedure.

Possible applications of the approach involve the analysis of the interactions of higher spin fields and massive fields. Another promising direction is the deformations of gravity-like theories for which the presymplectic BV-AKSZ formulation is especially concise, see e.g.~\cite{Grigoriev:2020xec,Dneprov:2022jyn,Dneprov:2024cvt,Grigoriev:2024ecv}. However, that would require taking into account deformations of the presymplectic form as this structure is usually field-dependent in the case of gravity-like models.


\section*{Acknowledgments}
\label{sec:Aknowledgements}
We are grateful to I.~Dneprov and A.~Kotov for useful discussions. 
M.G. wishes to thank  M.~Henneaux, E.~Skvortsov, and especially G.~Barnich for fruitful exchanges.

\setlength{\itemsep}{1pt}
\small

\providecommand{\href}[2]{#2}\begingroup\raggedright\endgroup

\end{document}